\tikzstyle{carre}=[draw, minimum size=2em]
\newcommand{\ff}{\mathbb{F}}
\newcommand{\ie}{\textit{i. e. }}
\newtheorem{theorem}{Theorem}
\title{Towards Secure Two-Party Computation from the Wire-Tap Channel}
\author{Herv\'e Chabanne$^{1,2}$, G\'erard Cohen$^{2}$, Alain Patey$^{1,2}$\\
$^1$Morpho, $^2$T\'el\'ecom ParisTech\\
Identity and Security Alliance\\ (The Morpho and T\'el\'ecom ParisTech Research Center)}
\date{}
\begin{document}

\maketitle

\begin{abstract}
We introduce a new protocol for secure two-party computation of linear functions in the semi-honest model, based on coding techniques. We first establish a parallel between the second version of the wire-tap channel model and secure two-party computation. This leads us to our protocol, that combines linear coset coding and oblivious transfer techniques. Our construction requires the use of binary intersecting codes or $q$-ary minimal codes, which are also studied in this paper.
\end{abstract}


\section{Introduction}

Secure Multi-party Computation has been introduced in the late eighties by Yao \cite{Yao86} and has been subject to a lot of studies to demonstrate its feasibility and completeness in several adversarial settings. Recently, a lot of work has been done to make these techniques practical. We refer the reader to \cite{Gol04,HL10,CDN12} for  overviews on the state of the art in Secure Multi-Party Computation. We here focus on the two-party setting. In this setting, two parties $P_1$ and $P_2$, holding respective inputs $X$ and $Y$, wish to securely compute a function $f$ on their inputs. At the end of the protocol, one party (or both) learns $f(X,Y)$, but gains no more information about the other party's input than what can be deduced from this output. The seminal example given by \cite{Yao86} is the millionaire's problem: two millionaires wish to know which one of them is the richer, without revealing their respective wealths. We here focus on the semi-honest adversarial model, where both parties are supposed to follow the protocol but where they try to infer more information than they should from all data exchanges. Yao \cite{Yao86} gives a construction fulfilling these requirements \cite{LP09}, applicable to any function expressed as binary circuit. This technique is based on garbled circuits and oblivious transfer.

Oblivious transfer, originally introduced by Rabin \cite{Rab81} in a slightly different version, enables one receiver $R$ to get one out of $N$ secrets $X_1,\ldots,X_N$ held by a sender $S$. The receiver chooses an index $c \in \{1,\ldots,N\}$, gets $X_c$ and learns nothing about the $X_j$'s, for $j \neq c$. Symmetrically, the sender $S$ learns nothing about $c$. This thus also known as \textit{Symmetric Private Information Retrieval} (SPIR). Many protocols and implementations exist for oblivious transfer, some pointers can be found in \cite{Lip}.

The Wire-Tap Channel model has been introduced by Wyner \cite{Wyn75} and later extended by Ozarow and Wyner \cite{OW84} to a second version considering an erasure channel for the eavesdropper. We here consider the Wire-Tap Channel II (WTC2) \cite{OW84} to establish a parallel with Secure Two-Party Computation. The model for WTC2 is described in Figure~\ref{fig:wtc}. Alice sends an encoded message to Bob. Eve is allowed to access a bounded number of coordinates of the codeword, and she moreover controls the erasure positions. In the original model, Eve is not supposed to learn any information about the original message, even knowing the coding and decoding algorithms. Later \cite{Wei91,CLZ94}, the information gained by Eve if she learns more than the original bound was studied. In particular, using  coset coding techniques, there exists a sequence $(d_i)$ of bounds such that Alice gains less than $i$ information bits about the original message if she has access to less than $d_i$ coordinates of the message.

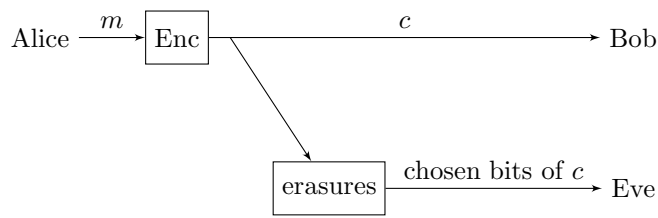
\begin{figure}[h!]
\begin{center}

\begin{tikzpicture}[transform shape,node distance=2cm,auto,>=latex']
  \draw
    (0,0) node [] (a) {Alice}
    node [right of=a, carre, node distance=1.8cm] (b) {Enc}
    node [right of=b, coordinate, node distance=.7cm] (begintap) {}
    node[right of = b, coordinate, node distance=2 cm](c){}
    node [below of=c, node distance=2cm, carre] (endtap) {erasures}
    node [right of=b,node distance=6cm] (d) {Bob}
    node [below of=d,node distance=2cm](e){Eve};

  \path[->] (a) edge node {$m$} (b);
  \path[->] (b) edge node {$c$} (d);
  \path[->] (begintap) edge node {} (endtap);
  \path[->](endtap)edge node {chosen bits of $c$}(e);
\end{tikzpicture}

\caption{The Wire-Tap Channel II}
\label{fig:wtc}
\end{center}
\end{figure}

This is where we establish the parallel with Secure Two-Party Computation. We see the two parties performing the secure computation as Alice and Eve in the WTC2 model. The message that is encoded by Alice would be the input $X$ of Alice. We want the bits of information that Eve gets about $X$ to be the actual bits of $f(X,Y)$. We will explain in this paper how to do this using linear coset coding techniques and some classes of linear functions. The last thing we need to achieve the parallel is a modeling of the erasure channel. This will be done using  oblivious transfers. We illustrate this parallel in Figure~\ref{fig:smcwtc}.

\begin{figure}[h!]
\begin{center}

\begin{tikzpicture}[transform shape,node distance=2cm,auto,>=latex', scale=.8]
  \draw
    (0,0) node [] (a) {$P_1$}
    node [right of=a, carre, node distance=1.8cm] (b) {Enc}
    node [right of=b, coordinate, node distance=.7cm] (begintap) {}
    node[right of = b, coordinate, node distance=2 cm](c){}
    node [below of=c, node distance=2cm, carre] (endtap) {oblivious transfers}
    node [below of=d,node distance=2cm](e){$P_2$};

  \path[->] (a) edge node {$X$} (b);
  \path[->] (b) edge node {$c$} (endtap);
  \path[->](endtap)edge node {chosen bits of $c$}(e);
\end{tikzpicture}

\caption{From WTC2 to Secure Two-Party Computation}
\label{fig:smcwtc}
\end{center}
\end{figure}
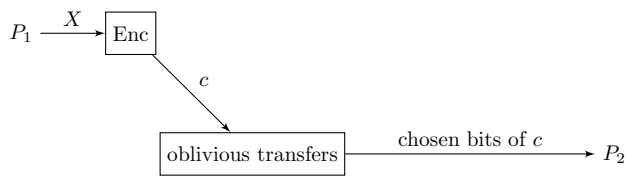

In Section~\ref{sec:wtc}, we recall some results about Wire-Tap Channel II and linear coset coding. We infer a protocol for secure two-party computation in Section~\ref{sec:protocol}. This raises the problem of finding minimal linear codes, that we study in Section~\ref{sec:minimal}. Finally, we conclude in Section~\ref{sec:conclu}.

\section{Wire-Tap Channel II and Linear Coset Coding}
\label{sec:wtc}

In the following, a \emph{$[n,k,d]$ linear code} denotes a subspace of dimension $k$ of $\ff_q^n$ with minimal Hamming distance $d$, where $q=p^k$, for $p$ prime and $k \in \mathbb{N}$. We denote by $C^{\bot}$ the dual code of $C$. The support of $c \in C$ is $supp(c)=\{i \in \{1,\ldots,n\} | c_i \neq 0\}$. We might use \textit{bit}, by abuse of language, even if $q \neq 2$, to denote a coordinate of a message or of a codeword.

\subsection{Linear Coset Coding}

Coset coding is a random encoding used for both models of Wire-Tap Channel \cite{Wyn75,OW84}. This type of encoding uses a $[n,k,d]$ linear code $C$ with a parity-check matrix $H$. Let $r = n-k$. To encode a message $m \in \mathbb{F}_q^r$, one  randomly chooses an element among all $x \in \mathbb{F}_q^n$ such that $m =H ^t \! x$. To decode a codeword $x$, one just applies the parity-check matrix $H$ and obtains the syndrome of $x$ for the code $C$, which is the message $m$. This procedure is summed up in Figure~\ref{fig:coset}.

\begin{figure}[h!]

\centering
\fbox{
\begin{minipage}{0.9\linewidth}
Given: $C$ a $[n,n-r,d]$ linear code with a $r \times n$ parity-check matrix $H$\\
\textbf{Encode}: $m \in \mathbb{F}_2^r \mapsto_R x \in \mathbb{F}_2^n$ s.t. $H ^t \! x = m$ \\ 
\textbf{Decode}: $x \in \mathbb{F}_2^n \mapsto m=H ^t \! x$
\end{minipage}
}
\caption{Linear Coset-coding}
\label{fig:coset}
\end{figure}

\subsection{The Wire-Tap Channel I}

The Wire-Tap Channel was introduced by Wyner \cite{Wyn75}. In this model, a sender Alice sends messages over a potentially noisy channel to a receiver Bob. An adversary Eve listens to an auxiliary channel, the Wire-Tap channel, which is a noisier version of the main channel. It was shown that, with an appropriate coding scheme, the secret message can be conveyed in such a way that Bob has complete knowledge of the secret and Eve does not learn anything. In the special case where the main channel is noiseless, the secrecy capacity can be achieved through a linear coset coding scheme.

%
%
%
%

\subsection{The Wire-Tap Channel II}

Ten years later, Ozarow and Wyner introduced a second version of the WT Channel \cite{OW84}. In this model, both main and Wire-Tap channels are noiseless. This time, the disadvantage for Eve is that she can only see messages with erasures: she has only access to a limited number of bits per codeword. She is however allowed to choose which bits she can learn. We summarize the Wire-Tap Chanel II in Figure~\ref{fig:wtc}.

The encoding used in this model is again a coset coding based on a linear code $C$, as in the Wire Tap Channel I with a noiseless main channel. Let $d^{\bot}$ denote the minimal distance of the dual $C^{\bot}$ of $C$. One can prove (see \cite{Wei91} for instance) that, if Eve can access less than $d^{\bot}$ bits of a codeword, then she gains no information at all on the associated message.


\subsection{Generalized Hamming Distances}

Generalized Hamming distances (or generalized Hamming weights) have first been considered by Wei \cite{Wei91}. The $i^{th}$ generalized Hamming distance, denoted by $d_i (C)$ or $d_i$ is the minimum size of the union of the supports of $i$ linearly independent codewords in $C$. We have $1 \leq d=d_1 \leq \ldots \leq d_k \leq n$.

Using generalized Hamming distances, we get a more precise evaluation of the information gained by Eve in the WTC2, depending on the linear code used for coset coding. For $i=1,\ldots,r$, let $d^{\bot}_i$ denote the $i^{th}$ generalized Hamming distance of $C^{\bot}$, the dual code of $C$. We have the following result \cite{Wei91}:

\begin{theorem}[WTC2 and Generalized Hamming Distances]
If Eve gets less than $d^{\bot}_i$ bits of the codeword $c$, she gains at most $i-1$ information bits about the original message $m$.
\end{theorem}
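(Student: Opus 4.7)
The plan is to show that, when Eve observes $x_E:=(x_j)_{j\in E}$, the space of $\mathbb{F}_q$-linear forms in $m$ that she can evaluate has dimension exactly $\dim C^\bot(E)$, where $C^\bot(E):=\{y\in C^\bot:\mathrm{supp}(y)\subseteq E\}$; the theorem then follows directly from the definition of the generalized Hamming distance $d_i^\bot$.

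First, using the coset-coding identity $m=H\,{}^t x$, I would note that for every $a\in\mathbb{F}_q^r$,
$$
a^T m \;=\; a^T H\,{}^t x \;=\; (H^T a)^T x,
$$
so Eve can evaluate $a^T m$ from $x_E$ alone iff $H^T a$ is supported on $E$. Because the rows of $H$ form a basis of $C^\bot$, the map $a\mapsto H^T a$ is a linear isomorphism $\mathbb{F}_q^r\to C^\bot$, and the space of such ``readable'' $a$'s corresponds bijectively to $C^\bot(E)$. Moreover, coset encoding with $m$ uniform makes $x$ uniform on $\mathbb{F}_q^n$, and a short rank--nullity computation then shows that, conditional on $x_E$, the message $m$ is uniformly distributed on a coset of $\mathbb{F}_q^r$ of codimension exactly $\dim C^\bot(E)$. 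Eve therefore gains precisely $\dim C^\bot(E)$ independent $q$-ary symbols about $m$, and nothing more.

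It remains to invoke the definition of $d_i^\bot$. If $\dim C^\bot(E)\geq i$, one can pick $i$ linearly independent codewords $y_1,\ldots,y_i\in C^\bot(E)$; since each $\mathrm{supp}(y_j)\subseteq E$, we get $\bigl|\bigcup_{j\leq i}\mathrm{supp}(y_j)\bigr|\leq|E|$, whence $d_i^\bot\leq|E|$ by definition of the $i$-th generalized Hamming distance. Contrapositively, $|E|<d_i^\bot$ forces $\dim C^\bot(E)\leq i-1$, which is exactly the claim.

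The main obstacle is the information-theoretic half of the second paragraph: one must verify that ``Eve reads $\dim C^\bot(E)$ linear forms in $m$'' is not merely a lower bound but an \emph{exact} count of her knowledge, i.e.\ that $I(m;x_E)=\dim C^\bot(E)$. This is where uniformity of $m$ (hence of $x$ under coset coding) is essential, ensuring that no extra information leaks through nonlinear correlations between $m$ and $x_E$ beyond the linear forms $(H^T a)^T x$ identified above.
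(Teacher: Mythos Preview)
Your argument is correct and is essentially the standard proof due to Wei. You correctly identify that the linear forms on $m$ determined by $x_E$ are precisely those $a^T m$ with $H^T a$ supported on $E$, hence form a space isomorphic to $C^\bot(E)$; the rank--nullity computation showing that $m$ given $x_E$ is uniform on a coset of codimension $\dim C^\bot(E)$ is exactly the right way to make the information-theoretic equality $I(m;x_E)=\dim C^\bot(E)$ rigorous (under uniform $m$), and the final contrapositive using the definition of $d_i^\bot$ is clean.

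Note, however, that the paper does \emph{not} give its own proof of this theorem: it simply states the result and attributes it to Wei~\cite{Wei91}. So there is nothing to compare against in the paper itself; your write-up reconstructs the classical argument that the paper is quoting. One small cosmetic remark: in the paper's conventions $H$ is $r\times n$ and vectors are rows, so $a^T H\,{}^t x$ would more naturally be written $a H\,{}^t x=(aH)\,{}^t x$, but this does not affect the substance.
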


\section{Our Protocol for Secure Two-Party Computation}
\label{sec:protocol}

\subsection{The Setting}

We describe our setting in Figure~\ref{fig:smc}. Notice that we can also give the result to $P_1$: since we work in the semi-honest model, where both parties follow the protocol, we can let $P_2$ send $f(X,Y)$ to $P_1$, once he has computed it.

\begin{figure}[htbp!]
                \centering

                \fbox{
      \begin{minipage}[c]{0.9\linewidth}
      
      \textbf{Inputs:}
      
$\bullet$      Party $P_1$ inputs $X \in \ff_q^r$

$\bullet$       Party $P_2$ inputs $Y \in S$
      
$\bullet$ Both parties know a description of $f: \ff_q^r \times S \rightarrow \ff_q$      
      
      \textbf{Outputs:}
      
$\bullet$		$P_1$ learns nothing about $Y$

$\bullet$		 $P_2$ obtains $f(X,Y)$ but learns nothing more about $X$ than what can be inferred from $f(X,Y)$.


 \end{minipage}
      }
\caption{Our Secure Two-Party Computation Setting}\label{fig:smc}
\end{figure}

We consider the secure evaluation of functions of the form 
\begin{eqnarray*}
f : & \ff_q^r \times S & \rightarrow  \ff_q \\
& (X,Y) & \mapsto  f(X,Y) = \sum\limits_{i=1}^r f_i(Y) \cdot x_i
\end{eqnarray*}
where $S$ is a given set, and $f_i : S \rightarrow \ff_q$, for $i=1,\ldots,r$. This class covers all linear functions of $X$ and $Y$ with range $\ff_q$ (\ie giving one "bit of information" about $X$ to $P_2$). 

For instance, if $Y \in \ff_q^r$ and $f_i(Y)=y_i$, $f$ is the scalar product over $\ff_q^r$. 

Squared euclidean distance can also be computed this way. $P_1$ also inputs $x_{r+1} = \sum\limits_{i=1}^r x_i^2$ and $f_i(Y)=-2y_i$, for $i=1,\ldots,r$, $f_{r+1}(Y)=1$. Thus , $P_2$ obtains $\sum\limits_{i=1}^r x_i^2 - 2 x_i y_i$, which is equivalent (for $P_2$) to the knowledge of $d(X,Y) = \sum\limits_{i=1}^r (x_i-y_i)^2$: it gives no additional information. 

If $q=p > \log(r)$ and inputs are binary vectors seen in $\ff_q$, it is also possible to compute Hamming distance (take $f_i(Y)=1-2y_i$). 

Securely computing these functions has applications in the signal processing and cryptographic domains, especially for privacy-preserving biometric recognition \cite{SSW09,BCP13}.

\subsection{From the Wire-Tap Channel to Secure Two-Party Computation}

As discussed in the introduction and illustrated in Figure~\ref{fig:smcwtc}, we transpose the WTC2 model to the Secure Two-party Computation setting, by assigning the role of Alice to $P_1$, the role of Eve to $P_2$ and modelling the erasure channel by oblivious transfers. We will use the notation $OT_t^n$ to denote the $t$-out-of-$n$ functionality described in Figure~\ref{fig:ot}. This can be implemented either using $t$ $OT_1^n$'s or more specific constructions, see \cite{Lip}.

\begin{figure}[htbp!]
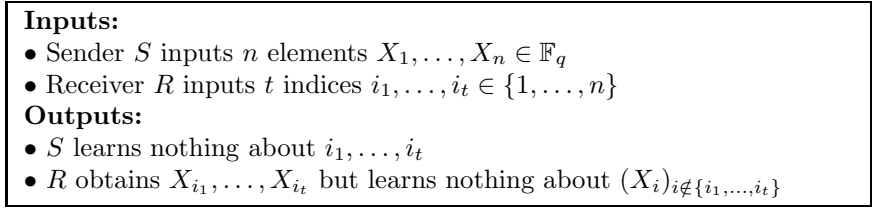

                \centering

                \fbox{
      \begin{minipage}[c]{0.9\linewidth}
      
      \textbf{Inputs:}
      
$\bullet$      Sender $S$ inputs $n$ elements  $X_1,\ldots,X_n \in \ff_q$

$\bullet$       Receiver $R$ inputs $t$ indices $i_1,\ldots,i_t \in \{1,\ldots,n\}$
      
      \textbf{Outputs:}
      
$\bullet$		$S$ learns nothing about $i_1,\ldots,i_t$

$\bullet$		 $R$ obtains $X_{i_1},\ldots,X_{i_t}$ but learns nothing about $(X_i)_{i \notin \{i_1,\ldots,i_t\}}$


 \end{minipage}
      }
\caption{The $OT_t^n$ Functionality}\label{fig:ot}
\end{figure}

\subsection{Choosing the Code}
\label{sec:code}

Let us first see how $P_2$ can choose the coordinates of the codeword that he gets through oblivious transfer, in order to obtain $f(X,Y)$. Let us consider the $r \times n$ matrix $H$ that is the parity-check matrix of the code $C$ used for coset coding, or, equivalently, the generator matrix of its dual code $c^{\bot}$. We denote by $H_i$ the $i^{th}$ row of $H$. Let $Z$ be an encoding of $X$, \ie such that $X =H ^t \! Z = \sum H_i z_i$. We consequently have $x_i = H_i \cdot ^t \!Z$ and $f(X,Y) = \sum f_i(Y) \cdot x_i = \sum f_i (Y) \cdot H_i \cdot ^t \!Z = (\sum f_i(Y)\cdot H_i) \cdot ^t \!Z$.

Thus, $P_2$ only needs the coordinates of $Z$ at the positions where $\sum f_i(Y) \cdot H_i$ is nonzero, \ie at the positions belonging to the support of $V=\sum f_i(Y) \cdot H_i$. This will ensure correctness. Let $i_1,\ldots,i_t = supp(V)$.

Now we need to ensure privacy of $P_1$'s data. We assume that $P_2$ only gets $z_{i_1},\ldots,z_{i_t}$. If there exists another vector $W \in C^{\bot}$, such that $V$ and $W$ are linearly independent and $supp(W) \subset supp(V)$, then $P_1$ learns at least another bit of information ($W ^t \! Z$)  about $Z$. To ensure $P_1$ only learns $f(X,Y)$, we need to enforce that $V$ is minimal in $C^{\bot}$, \ie that his support does not contain the support of another linearly independent codeword $W \in C^{\bot}$. Since we wish to ensure a notion of completeness, \ie to make our protocol usable with any $f$ and $Y$ fitting our setting, we require every codeword of $C^{\bot}$ to be minimal, \ie we require $C^{\bot}$ to be a minimal linear code (see Section~\ref{sec:minimal}).

Now let us fix some $V \in C^{\bot}$, let $t=|supp(V)|$ and let us consider the linear application $\phi : C^{\bot} \rightarrow \ff_q^{n-t} ; c \mapsto (c_i)_{i \notin supp(V)}$. Due to the definition of linearity, only the $\lambda V$, for $\lambda \in \ff_q$ have a support included in $supp(V)$ thus $Ker \phi = \ff_q . V$ and $rank(\phi)=dim(C^{\bot})-1=k-1$. Thus, if we let $P_2$ learn the $t$ coordinates of $Z$ corresponding to $supp(V)$, the remaining coordinates lie in a space of dimension $k-1$ and $P_2$ only learns one bit of information about $X$.

Consequently, using a minimal codeword ensures privacy of $P_1$ against $P_2$.

\subsection{Our Protocol}

We put together our studies of the last paragraphs and we get the protocol described in Figure~\ref{fig:prot}. Privacy against $P_2$ is ensured thanks to the remarks of Section~\ref{sec:code} and privacy against $P_1$ is ensured by the use of oblivious transfer, which is the only data exchange from $P_2$ to $P_1$. Correctness is also discussed in Section~\ref{sec:code}.

Some details still need to be considered. The size $t$ of $supp(V)$ can reveal information about $Y$ to $P_1$. Thus, either we need an oblivious transfer protocol that hides to the sender the number of transferred items, or we require $P_2$ to perform $w_{max} -t$ dummy requests, where $w_{max}$ is the maximal weight of a codeword of $C^{\bot}$. Since we work in the semi-honest model, this will not break the security properties (of course, a malicious (active) adversary would use real requests instead, but that setting is out of the scope of this paper).

\begin{figure}[htbp!]
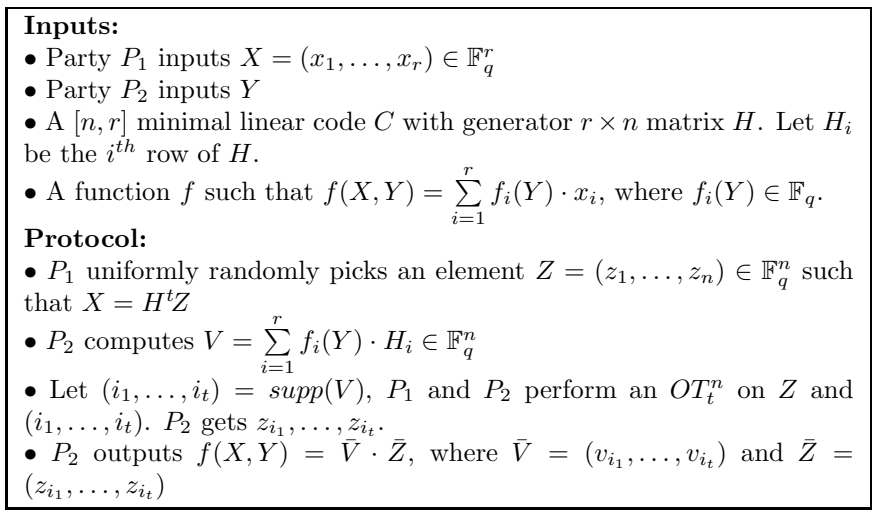

                \centering

                \fbox{
      \begin{minipage}[c]{0.9\linewidth}
      
      \textbf{Inputs:}
      
$\bullet$      Party $P_1$ inputs $X=(x_1,\ldots,x_r) \in \ff_q^r$

$\bullet$       Party $P_2$ inputs $Y$
      
$\bullet$       A $[n,r]$ minimal linear code $C$ with generator $r \times n$ matrix $H$. Let $H_i$ be the $i^{th}$ row of $H$.

$\bullet$ A function $f$ such that $f(X,Y) = \sum\limits_{i=1}^r f_i(Y) \cdot x_i$, where $f_i(Y) \in \ff_q$.

	\textbf{Protocol:}
	
	$\bullet$ $P_1$ uniformly randomly picks an element $Z=(z_1,\ldots,z_n) \in \ff_q^n$ such that $X =H ^t \! Z$
	
	$\bullet$ $P_2$ computes $V=\sum\limits_{i=1}^r f_i(Y) \cdot H_i   \in \ff_q^n$
	
	$\bullet$ Let $(i_1,\ldots,i_t)=supp(V)$, $P_1$ and $P_2$ perform an $OT_t^n$ on $Z$ and $(i_1,\ldots,i_t)$. $P_2$ gets $z_{i_1},\ldots,z_{i_t}$.
	
	$\bullet$ $P_2$ outputs $f(X,Y) = \bar{V} \cdot \bar{Z}$, where $\bar{V}=(v_{i_1},\ldots,v_{i_t})$ and 
	$\bar{Z}=(z_{i_1},\ldots,z_{i_t})$
     
%
%


 \end{minipage}
      }
\caption{Our Protocol for Secure Two-Party Computation}\label{fig:prot}
\end{figure}

We would like to point out that this protocol might not only have theoretical interest. For instance, the protocol of \cite{BCP13} uses coding-like techniques and oblivious transfer only, and is one of the most efficient protocol for securely computing functions such as Hamming distances on binary vectors, outperforming protocols based on additively homomorphic cryptosystems or on garbled circuits. In the case of the protocol of this paper, performance will highly rely on the rate of the underlying code. As we explain in Section~\ref{sec:minimal}, we are lacking results in the $q$-ary case.

\subsection{Examples}

We consider as an illustration the secure evaluation of scalar product over $\ff_q^r$, \ie $f(X,Y)=\sum\limits_{i=1}^r x_i \cdot y_i$. One can deduce how to proceed for any function encompassed by our protocol, by replacing $y_i$ by $f_i(Y)$.

\subsubsection{Simplex and Hamming Codes}

One can easily be convinced that constant-weight codes are minimal, in the binary or the $q$-ary case. Since we use linear codes, constant-weight codes are simplex codes (or equivalent), duals of Hamming codes \cite{Bon84}. Let $q=2,r=3, n=7$. The $3 \times 7$ matrix $H$ can for example be written as follows:
\[H=
\begin{pmatrix}
0 &0 &0&1&1&1&1\\
0&1&1&0&0&1&1\\
1&0&1&0&1&0&1
\end{pmatrix}
\]

Let $X=(101)$ and $Y=(110)$. $P_1$ can for instance encode $X$ with $Z=(0000100)$. $Y$ computes $V=H_1 + H_2 = (0111100)$ and requests, using oblivious transfers, the bits $z_2,z_3,z_4,z_5$. $P_2$ thus gets $\bar{Z}=(0001)$. By dot-product with $\bar{V}$, $P_2$ gets the result $f(X,Y)=\sum x_i \cdot y_i = 1$.

Notice that, since the code is constant-weight, $P_2$ always requests 4 bits, we thus do not need to hide the number of requested bits. Unfortunately, this nice property is only enjoyed by simplex codes, that have a very bad rate, $n$ growing exponentially with $r$, the rates being even worse in the $q$-ary case.

\subsubsection{A More Efficient Binary Example}
\label{sec:example}

In the binary case, we can easily obtain minimal codes with better rates than simplex codes (see Section~\ref{sec:minimal}). For instance, let $r=4$, we can have $n=9$ (optimal \cite{Slo93}), for instance using
\[
H=
\begin{pmatrix}
1&0&1&0&0&0&1&0&1\\
0&1&1&0&0&0&0&1&1\\
0&0&0&1&0&1&1&0&1\\
0&0&0&0&1&1&0&1&1
\end{pmatrix}
\]

Using this code, $P_2$ will request either 4 or 6 coordinates of $Z$, to obtain $f(X,Y)$, depending on $Y$. For instance if $Y=(1000)$ or $Y=(0011)$, $P_2$ will only request 4 coordinates, but if $Y=(0110)$, $P_2$ will need 6 coordinates.

\subsubsection{Comparison to the Yao's Protocol}

Let us consider secure evaluation of scalar product over $\ff_2^r$ using Yao's protocol \cite{Yao86,HL10,Sch12}. The binary circuit contains $r$ AND gates, we do not count XOR gates (see \cite{Sch12} and references therein for known optimizations on garbled circuits). Let $k$ be a security parameter (e.g. 80 or 128). Party $P_1$ has to compute $r$ garbled gates (4$r$ hash function evaluations). Party $P_2$ has to evaluate $r$ garbled gates gates ($r$ hash function evaluations). They perform $k$ $OT_1^2$'s on $k$-bit inputs ($P_2$'s input wire labels). Furthermore, $P_1$ also needs to send $r$ $k$-bit keys ($P_1$'s input wire labels) and $r$ garbled gates ($3 r k$ bits).

Now let us consider our protocol using a $[n,r]$ minimal code with maximum codeword Hamming weight equal to $w_{max}$. Our protocol requires linear algebra operations and a $OT_{w_{max}}^n$, with 1-bit inputs. For instance, the $OT_{w_{max}}^n$ operation can be realized using $w_{max} OT_1^{n-w_{max}+1}$, still with 1-bit inputs, but there might be more efficient procedures. Using for instance the construction of \cite{CZ94} to build minimal binary codes, one can have $n \approx 6.4 r$, for any $r$. This comparison in the binary case is summed up in Table~\ref{tab:comp}.

\begin{table}

\begin{center}

\begin{tabular}{|c|c|c|c|c|}
\hline
Protocol & OT (computation  & Add. data  & Add. compu-  &Add. compu-\\
& + data exchanges)&exchanges&tation ($P_1$)&tation ($P_2$) \\
\hline
Yao & $r \times OT_1^2$ & $4 r k$ bits & $4 r$ hash  & $r$ hash \\
& ($k$-bit inputs)&&function eval.&function eval.\\
\hline
Our  & $1 \times OT_{w_{max}}^n$  & $\emptyset$ & Linear  & Linear \\
Protocol&(1-bit inputs)&&algebra&algebra\\
\hline

\end{tabular}
\caption{Comparison with the Yao's protocol, in the binary case}
\label{tab:comp}
\end{center}
\end{table}

\section{Intersecting Codes and Minimal Codes}
\label{sec:minimal}

In our protocol, we need linear codes where all codewords are \textit{minimal}. Let $C$ be a linear code of length $n$.  A codeword $c$ is said to be \textit{minimal} if $\forall c' \in C, (supp(c') \subset supp(c)) \implies$ ($c$ and $c'$ are linearly dependent). We say that a linear code $C$ is \textit{minimal} if every nonzero codeword of $C$ is minimal. This notion is closely related to the notion of intersecting codes \cite{CL85}. The notions are identical in the binary case but no more in the $q$-ary case (a minimal code is intersecting, but the inverse is not always true). We recall that an \textit{intersecting code} $C$ is such that for all nonzero $c,c' \in C, supp(c) \cap supp(c') \neq \emptyset$.

Interestingly, use of intersecting codes or minimal codewords has been suggested for oblivious transfer \cite{BCS96} and for secret sharing \cite{AB98,DY03,SL12}, which is a tool widely used for Secure Multi-Party Computation \cite{CDN12}.

\subsection{The binary case}

Due to the coincidence with the notion of intersecting codes, binary minimal codes have received a lot of attention \cite{CL85,Slo93,CZ94,BCS96,EC99}. For instance, \cite{CL85} gives definitions, some generic constructions and non-constructive bounds on rates; \cite{Slo93} gives explicit constructions for small dimensions and summarizes bounds on minimal distance; \cite{CZ94} gives an explicit constructive sequence of intersecting codes with high rate, and so on. We do not here detail these results. We only sum up what is important for us: there exist explicit constructions of minimal binary linear codes with good rates. 
Thus, our protocol of Section~\ref{sec:protocol} can be constructed in the binary case using codewords whose size grows linearly with the size of the inputs

\subsection{The $q$-ary case}

Finding minimal $q$-ary codes has received little attention \cite{DY03,GLL10,SL12} in the domain of secret sharing. \cite{SL12} details some properties of minimal linear codes, in particular some sufficient conditions for a code to be minimal are given. \cite{DY03,SL12} exhibit constructions of minimal codes using irreducible cyclic codes, which unfortunately do not achieve good rates.
As said before, simplex codes are minimal, they however suffer from a very bad rate. Indeed, a simplex code of dimension $k$ has length $q^{k}-1$. This gives us an existential and constructive result about $q$-ary minimal linear codes, but we still need better rates.

One can also build a $q$-ary minimal linear code by expanding the columns of the generator matrix of a binary intersecting codes and adding every column of elements in $\ff_q$ sharing the same support. This however does not lead to good codes either. For instance,  we can expand the $4\times 9$  $H$ matrix of Section~\ref{sec:example} to a $4 \times (4q+(q-1)^3)$ matrix, which is slightly better than the simplex code. For instance, with $q=3$, we obtain the following $4 \times 20$ matrix.


\[
H=
\begin{pmatrix}
1&0& 11& 0& 0 &00& 11& 00& 11111111\\
0&1& 12& 0& 0 &00& 00& 11& 11112222\\
0&0& 00& 1& 0 &11& 12& 00& 11221122\\
0&0& 00& 0& 1& 12& 00& 12& 12121212
\end{pmatrix}
\]

Interestingly, through other means, Song and Li \cite{SL12} also construct a $[20,4]$ ternary minimal code.

We exhibit two bounds on the rates of minimal codes. Unfortunately, these proofs are not constructive. 

\begin{theorem}[Maximal Bound]
Let $C$ a minimal linear $[n,k,d]$ $q$-ary code, then $R \leq \log_q(2)$
\end{theorem}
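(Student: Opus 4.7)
The approach is to transform minimality into a statement about supports of codewords and then count. The key observation is that in a minimal code, nonzero codewords sharing the same support must be scalar multiples of one another: if $c,c'\in C\setminus\{0\}$ satisfy $supp(c)=supp(c')$, then in particular $supp(c')\subseteq supp(c)$, so by minimality $c$ and $c'$ are linearly dependent. Conversely, any two scalar multiples of a fixed nonzero codeword share the same support. Hence the map $c \mapsto supp(c)$ partitions the $q^{k}-1$ nonzero codewords into equivalence classes of size exactly $q-1$, and the number $M$ of distinct supports of nonzero codewords satisfies $M=(q^{k}-1)/(q-1)$.

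Next, I would use minimality a second time to argue that these $M$ distinct supports form an \emph{antichain} in the Boolean lattice $2^{\{1,\ldots,n\}}$: if $supp(c')\subsetneq supp(c)$ for two nonzero codewords, then $c$ and $c'$ would be linearly dependent, which forces $supp(c')=supp(c)$, a contradiction. With the antichain in hand, Sperner's theorem gives $M\leq \binom{n}{\lfloor n/2\rfloor}$, and the very crude bound $\binom{n}{\lfloor n/2\rfloor}\leq 2^{n}$ suffices. Plugging in,
\begin{equation*}
q^{k}\;=\;1+(q-1)M\;\leq\;1+(q-1)\,2^{n},
\end{equation*}
so taking $\log_{q}$ yields
\begin{equation*}
k\;\leq\;\log_{q}\!\bigl(1+(q-1)\,2^{n}\bigr)\;\leq\;n\log_{q}(2)+\log_{q}(q-1)+o(1),
\end{equation*}
from which $R=k/n\leq \log_{q}(2)+O(1/n)$, i.e.\ $R\leq \log_{q}(2)$ in the asymptotic regime that the statement targets.

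The main obstacle, or really subtlety, is that the stated inequality $R\leq \log_{q}(2)$ is only clean asymptotically: the counting argument leaves an additive slack of $\log_{q}(q-1)/n$ (cruder count) or $\log_{q}\!\binom{n}{\lfloor n/2\rfloor}/n - \log_{q}2$ (via Sperner), both vanishing as $n\to\infty$. One can therefore either present the bound as asymptotic, or sharpen the antichain bound using the Lubell--Yamamoto--Meshalkin inequality to shave the $\log_{q}(q-1)/n$ term using the fact that all supports have cardinality $\leq n-k+1$ (an easy consequence of minimality via the projection onto coordinates outside $supp(c)$). Either way, the core of the proof is the two-step observation: minimality forces (i) proportional codewords to share supports and (ii) the distinct supports to form an antichain, after which Sperner-type counting concludes.
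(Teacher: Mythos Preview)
Your proposal is correct and follows essentially the same approach as the paper: observe that minimality forces the supports of nonzero codewords to form a Sperner family (antichain), apply Sperner's theorem to bound the number of supports by $\binom{n}{\lfloor n/2\rfloor}$, and deduce $q^k \leq 1+(q-1)\binom{n}{\lfloor n/2\rfloor}$. You supply more detail than the paper (which is quite terse), and you are right to flag that the clean inequality $R\leq \log_q(2)$ is only obtained asymptotically, with an $O(1/n)$ slack; the paper silently absorbs this.
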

\begin{proof}
This bound is even true for non-linear minimal codes. Let us consider the family $F$ of the supports of the vectors of $C$. Due to the definition of minimal codes, this is a Sperner family. It is known that $|F|\leq {n \choose n/2}$. Thus, $|C| =q^k \leq 1+ (q-1) {n \choose n/2}$ then $R=k/n \leq \log_q (2)$.
\end{proof}

\begin{theorem}[Minimal Bound]
\label{thm:minbound}
For any $R$, $0 \leq R=k/n \leq \frac{1}{2} \log_q (\frac{q^2}{q^2-q+1})$, there exists an infinite sequence of $[n,k]$ minimal linear codes.
\end{theorem}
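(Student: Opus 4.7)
The plan is to apply a random coding (Shannon-style) argument over the ensemble of uniformly random generator matrices $G \in \ff_q^{k \times n}$, showing that for any $R = k/n$ strictly below the stated threshold, the expected number of ``bad'' ordered pairs of linearly independent codewords $(c, c')$ with $supp(c') \subset supp(c)$ is less than $1$. Any $G$ in the ensemble that achieves zero such pairs generates a minimal code.

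First, I would fix two linearly independent messages $m, m' \in \ff_q^k$. Writing $G_1, \dots, G_n$ for the columns of $G$, chosen i.i.d.\ uniformly in $\ff_q^k$, the pair $(m \cdot G_i, m' \cdot G_i)$ is uniform on $\ff_q^2$ (since $v \mapsto (m \cdot v, m' \cdot v)$ is surjective when $m, m'$ are linearly independent), and these pairs are independent across $i$. The inclusion $supp(m'G) \subset supp(mG)$ is equivalent to ruling out, at every coordinate $i$, the event ``$m \cdot G_i = 0$ and $m' \cdot G_i \neq 0$'', which has probability $(q-1)/q^2$. Hence the probability of the inclusion equals $\left(\frac{q^2-q+1}{q^2}\right)^n$.

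Next, I would union-bound over the at most $q^{2k}$ ordered pairs of linearly independent messages: the expected number of bad pairs is bounded by $q^{2k}\left(\frac{q^2-q+1}{q^2}\right)^n$. Requiring this to be strictly less than $1$ yields exactly $R < \frac{1}{2}\log_q\!\left(\frac{q^2}{q^2-q+1}\right)$. Hence, for such $R$, some $G$ in the ensemble has no bad pair at all; the resulting code is minimal, and (by a standard auxiliary argument, or by absorbing it into the same union bound) has full row rank $k$. Letting $n \to \infty$ with $k = \lfloor R n \rfloor$ then yields the claimed infinite sequence, and the boundary case is reached by taking $R$ arbitrarily close to the threshold from below.

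The main obstacle is bookkeeping rather than any real estimate: one must take care to count ordered pairs with $c, c'$ \emph{linearly independent} (not merely $m, m'$) so that support containment actually constitutes a failure of minimality, and one must confirm that the random generator matrix has full row rank so the code has dimension exactly $k$. Both issues are minor: containment of supports between linearly dependent codewords is automatic and does not violate minimality, and the probability that $G$ loses rank is at most $q^{k-n}$, negligible compared to $q^{2k}\left(\frac{q^2-q+1}{q^2}\right)^n$ throughout the relevant range of $R$.
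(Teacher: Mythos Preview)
Your argument is correct and lands on the same threshold as the paper, but the bookkeeping is organized differently. The paper works over the ensemble of $[n,k]$ \emph{subspaces}: it first counts ``bad'' ordered pairs $(a,b)\in\ff_q^n\times\ff_q^n$ with $a,b$ linearly independent and $supp(b)\subset supp(a)$ --- there are at most $(q^2-q+1)^n$ of them --- and then uses the Gaussian binomial $\begin{bmatrix} n-2\\ k-2\end{bmatrix}$ to count how many $[n,k]$ codes contain a given pair, concluding that at least $\begin{bmatrix} n\\ k\end{bmatrix}-\begin{bmatrix} n-2\\ k-2\end{bmatrix}(q^2-q+1)^n$ codes are minimal. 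You instead randomize over \emph{generator matrices}, compute the per-coordinate probability $(q^2-q+1)/q^2$ directly, and union-bound over the $<q^{2k}$ message pairs. The two calculations are dual and give the same asymptotics; yours is a shade more elementary in that it sidesteps Gaussian binomials, while the paper's version automatically delivers codes of dimension exactly~$k$. On that last point your separate rank estimate is in fact unnecessary: if $G$ is rank-deficient, pick a nonzero $m$ in its left kernel and any $m'$ independent of $m$; then $supp(mG)=\emptyset\subset supp(m'G)$, so $(m',m)$ is already a bad pair and is counted in your union bound. (Your phrasing ``negligible compared to $q^{2k}\bigl((q^2-q+1)/q^2\bigr)^n$'' is slightly off, since that quantity itself tends to~$0$; what you need, and have, is simply that both terms tend to~$0$.)
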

\begin{proof}
The proof is similar to the one of \cite{CL85} in the binary case. Let us fix $n$ and $k$. For $a \in \ff_q^n$, such that $|supp(a)|=i$, there are $q^i -q$ linearly independent vectors $b$ such that $supp(b) \subset supp(a)$. The pair $(a,b)$ belongs to $\begin{bmatrix}
n-2\\k-2
\end{bmatrix}$ linear $[n,k]$ codes, where $\begin{bmatrix}
x\\k
\end{bmatrix}$ denotes the $q$-ary Gaussian binomial coefficient.

There are less than $\sum\limits_{i=0}^n (q-1)^i (q^i-q) = (1+(q-1)q)^n - q^n \leq (q^2 - q +1)^n$ such ordered ``bad''  $(a,b)$ pairs. At least $\begin{bmatrix}
n\\k
\end{bmatrix} - \begin{bmatrix}
n-2\\k-2
\end{bmatrix} (q^2 - q +1)^n$ linear $[n,k]$ codes thus contain no ``bad'' pairs, \ie are minimal. For $k/n \leq \frac{1}{2} \log_q (\frac{q^2}{q^2-q+1})$, the quantity is positive.
\end{proof}

Notice that the minimal bound exposed in Theorem~\ref{thm:minbound} meets the $\frac{1}{2} \log_2(\frac{4}{3})$ bound in the binary case exhibited in \cite{CL85}. We can however not use the same techniques as in the binary case (e.g. \cite{CL85,CZ94}) to obtain explicit constructions with high rates, which remains an open issue.

\section{Conclusion}
\label{sec:conclu}

We present a theoretical protocol for performing secure two-party computation of linear functions based on linear codes and oblivious transfer only, using a parallel with the Wire-Tap Channel II model. Due to the efficiency of linear algebra and current constructions of oblivious transfer, this could be a basis for efficient protocols for secure evaluation of some classes of functions. 

Several leads for future research are:\\
$\bullet$ Constructions of good $q$-ary minimal linear codes;\\
$\bullet$ Other encoding techniques than linear coset coding;\\
$\bullet$ Techniques to encompass secure computation of non-linear functions;\\
$\bullet$ Techniques to deal with malicious adversaries.

\bibliographystyle{alpha}
\bibliography{arxiv13_CCP}

\end{document}